\documentclass[aps,pra,twocolumn,notitlepage,superscriptaddress,longbibliography,floatfix,10pt]{revtex4-2}

\usepackage{amsmath,amssymb,amsthm,mathtools}
\usepackage{bm}
\usepackage{graphicx}
\usepackage{booktabs}
\usepackage{xcolor}
\usepackage{enumitem}
\usepackage{microtype}

\theoremstyle{definition}

\theoremstyle{plain}
\newtheorem{theorem}{Theorem}

\newtheorem{proposition}{Proposition}
\newtheorem{corollary}{Corollary}

\newcommand{\loc}{\mathcal L}
\newcommand{\ns}{\mathcal N\!S}
\newcommand{\sat}{\mathcal S}
\newcommand{\qtwo}{\mathcal Q_{2}}
\newcommand{\qfour}{\mathcal Q_{4}}
\newcommand{\bits}{\,\mathrm{bits}}

\usepackage{natbib}
\usepackage[hidelinks]{hyperref}
\begin{document}

\title{Complexity-Aware Theory Testing from Bell Witnesses}

\author{Jianshuo GAO}
\affiliation{Peking University, School of Physics}
\date{\today}

\begin{abstract}
Bell statistical-strength analyses and complexity-based model selection are usually treated separately.
Here we relate them by showing that a witness obtained from a coarse-graining of full Bell trials yields, through data processing, a lower bound on the Kullback--Leibler (KL) distance to a competitor class in terms of the induced witness distribution.
For binary Bell-game witnesses this reduces to a Bernoulli bound, and in the CHSH scenario the local image collapses to a single threshold, giving the closed-form expression
$D_{\mathrm{KL}}(\mathrm{Bern}(\omega)\Vert \mathrm{Bern}(3/4))$ under uniform inputs, with a corresponding extension to known nonuniform designs.
A finite-sample Hoeffding argument gives a lower confidence bound under independent trials.
We also include a non-CHSH example based on the three-party Mermin--GHZ game.
Because the bound is measured in bits per trial, it can be compared directly with an MDL/BIC-type complexity penalty and thereby yields a conservative crossover criterion for when a more expressive competitor becomes worthwhile.
For the reproducible four-photon data of Wang \emph{et al.}, the witness certifies a positive information gap against locality, while a full-table comparison across local, no-signalling, saturated, and two compact nonlocal families favors low-dimensional nonlocal descriptions once complexity is charged.
A four-parameter unbiased-correlator control shows that the data support compact nonlocality over locality, while only weakly distinguishing the specific cosine structure of the two-parameter model; an AIC comparison instead favors broader nonlocal controls.
We also report witness-based benchmarks from additional published CHSH experiments and discuss the interpretational scope of BIC for constrained or non-regular model classes.
\end{abstract}

\maketitle

\section{Introduction}

How should finite experimental data shift our preference from a simpler physical theory to a more expressive one?
In quantum foundations this question is often posed in Bell form: how strongly do the data disfavor locality?
In statistical model selection the same question appears as a trade-off between fit and complexity, formalized by minimum description length (MDL), BIC-type approximations to Bayesian evidence, or related Occam penalties \cite{rissanen1978modeling,barron1998minimum,grunwald2007minimum,schwarz1978estimating,drton2017singular}.
These viewpoints are closely related in spirit, but they are usually operationalized at different levels.
Bell analyses often report a witness value, a $p$ value, or a KL rate against local realism; model-selection analyses compare fully fitted classes through a code-length or marginal-likelihood proxy.
What is still missing is a direct bridge from an experimentally visible witness to a complexity-charged model-comparison criterion.

The aim of this paper is to make that connection explicit.
The individual ingredients are standard---KL divergence in Bell statistical-strength analyses and MDL/BIC-style penalties in model selection---but their combination leads to a useful inference rule.
A directly observed Bell witness can be converted into a certified information rate, measured in bits per trial, and this quantity can be set against a complexity penalty written in the same units.
This yields a conservative crossover criterion for when a broader class is justified, even before a full-table fit of that class is carried out.

The CHSH scenario provides the most transparent closed-form realization of this idea.
If a trial is coarse-grained to the usual win/loss variable, then data processing implies that the KL distance from the full trial distribution to the local polytope is at least the Bernoulli KL distance between the observed win rate and the best local win rate.
Under uniform inputs this reduces to the explicit lower bound
\begin{equation}
D_{\mathrm{KL}}(P\Vert \loc)
\ge
D_{\mathrm{KL}}\!\left(\mathrm{Bern}(\omega(P))\,\middle\|\,\mathrm{Bern}(3/4)\right).
\end{equation}
Because the quantity on the right is measured in bits per trial, it can be compared directly with an MDL/BIC penalty.

The same mechanism is not restricted to CHSH.
Any binary Bell-game witness gives a Bernoulli KL certificate, including multipartite and multi-outcome games that admit a win/loss score.
More generally, any finite witness obtained by coarse-graining the full trial alphabet inherits a witness-level KL lower bound.
CHSH is convenient because the induced local witness family collapses to a one-parameter Bernoulli half-interval with a closed-form boundary.

The second issue is finite data.
We therefore derive a finite-sample confidence version using Hoeffding's inequality \cite{hoeffding1963probability}.
The resulting lower confidence bound is rigorous for independent trials with a known measurement design, whether the settings are randomized shot by shot or fixed by an external schedule.
When only reconstructed coincidence tables are available, as in the case study below, the finite-sample interpretation should be regarded as an effective-trial approximation rather than as a literal application of the theorem.

The third issue concerns model-selection interpretation.
The local and no-signalling polytopes are constrained boundary models, so the regularity assumptions behind classical Schwarz-type BIC do not apply without qualification \cite{watanabe2009algebraic,drton2017singular}.
Accordingly, we use BIC as a transparent MDL proxy rather than as an exact asymptotic evidence formula for non-regular classes.
To keep that distinction visible, we supplement the BIC comparison with a low-dimensional control family, an AIC comparison, and simulation-based calibration.

The paper proceeds in three steps.
First, we prove a general witness coarse-graining theorem and a binary Bell-game corollary, and then specialize them to CHSH with an explicit design-dependent local benchmark.
Second, we relate the resulting witness-certified KL rate to a complexity penalty measured in bits per trial, which leads to a conservative crossover criterion.
Third, we examine how this relation behaves in practice through a non-CHSH example based on the three-party Mermin--GHZ game, a detailed reanalysis of the Wang \emph{et al.} data \cite{wang2025violation}, witness-based checks on additional published CHSH experiments, and simulation and AIC-based comparisons that clarify which empirical conclusions are stable across complexity proxies.

\section{Relation to prior work}

Three literature threads meet here.
The first is the Bell statistical-strength program, where KL divergence quantifies the asymptotic evidence rate against local realism \cite{vandam2005statistical} and can be optimized over source states, settings, or detector efficiencies \cite{zhang2010statistical,zhang2013efficient}.
That literature already teaches us that KL divergence is the right evidence metric.
What it does \emph{not} directly provide is a witness-to-complexity bridge: given only a coarse witness value, when does that witness certify enough evidence to justify a broader model class once complexity is charged?

The second thread is MDL, Bayesian evidence, and algorithmic Occam reasoning \cite{rissanen1978modeling,barron1998minimum,li2008introduction,grunwald2007minimum}.
These frameworks formalize the idea that likelihood gains must be weighed against the code length or dimension of the model that achieves them.
But they typically start from a fully specified model class and a fitted likelihood.
They do not usually begin with a Bell witness and ask how much complexity budget that witness alone can already certify.

The third thread concerns the limitations of classical BIC in singular or constrained models.
Polytope models can live on boundaries, corners, or lower-dimensional active faces, so the regular asymptotics underlying Schwarz's derivation need not apply directly \cite{schwarz1978estimating,watanabe2009algebraic,drton2017singular}.
This matters acutely in Bell scenarios, where local and no-signalling sets are polyhedral.
Our approach is therefore to keep the witness theorems separate from the choice of full-table complexity proxy, and to discuss the scope of the latter explicitly in Sec.~\ref{sec:bicscope}.
From this point of view, the Bell side contributes an evidence metric and the MDL side contributes a complexity accounting; the link between them is a witness-based KL rate that can be compared directly with a penalty written in bits per trial.
This comparison can be carried out before a large alternative class has been fully fitted.

\section{Operational framework: complexity certificates and crossover rules}

We fix an experimental description language $\mathcal E$ and represent a theory $T$ by an executable predictor that returns the probabilities relevant to each experiment $e\in\mathcal E$.
An explicit reference implementation $d_T$ serves as a \emph{complexity certificate}; its length
\begin{equation}
L_{\mathrm{cert}}(T):=|d_T|
\end{equation}
upper-bounds the ideal predictive description length up to an additive constant \cite{li2008introduction}.
For fitted parametric classes we will later use BIC as a transparent MDL proxy.

This coding viewpoint induces the standard Occam factor.
If candidate theories $T_i$ receive prior weights $w_i\propto 2^{-L_i}$, then posterior odds factor into a complexity term and a likelihood term.

\begin{proposition}[Posterior odds = Occam factor $\times$ Bayes factor]
\label{prop:posteriorodds}
Let $\{T_i\}$ be candidate predictors with likelihoods $\nu_i(\cdot)$ and code lengths $L_i$.
For any dataset $x$,
\begin{equation}
\frac{\mathbb P(T_i\mid x)}{\mathbb P(T_j\mid x)}
=
2^{-(L_i-L_j)}\,
\frac{\nu_i(x)}{\nu_j(x)}.
\end{equation}
\end{proposition}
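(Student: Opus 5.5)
The plan is to apply Bayes' rule directly, with the prior fixed by the coding assumption stated just before the proposition, $w_i\propto 2^{-L_i}$. Write $w_i = c\,2^{-L_i}$, where $c>0$ is the normalizing constant $c=\bigl(\sum_k 2^{-L_k}\bigr)^{-1}$. This constant is finite and positive whenever the code lengths satisfy Kraft's inequality, e.g.\ when the certificates $d_{T_i}$ form a prefix-free set as in the MDL setting of \cite{li2008introduction}. If Kraft fails, we only need the weights to be proportional to $2^{-L_i}$, with any finite normalizer.

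Step one is to write the posterior for each hypothesis. By Bayes' theorem,
\begin{equation}
\mathbb P(T_i\mid x)=\frac{w_i\,\nu_i(x)}{\sum_k w_k\,\nu_k(x)},
\end{equation}
where $\nu_i(x)=\mathbb P(x\mid T_i)$ is the likelihood the predictor $T_i$ assigns to the dataset. Step two is to form the ratio for indices $i$ and $j$. The evidence denominator $\sum_k w_k\nu_k(x)$ is common to both posteriors and cancels, as does the constant $c$. What remains is
\begin{equation}
\frac{w_i\nu_i(x)}{w_j\nu_j(x)}=2^{-L_i+L_j}\,\frac{\nu_i(x)}{\nu_j(x)},
\end{equation}
which is exactly the claimed factorization into an Occam factor and a Bayes factor.

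The only real obstacle is well-definedness, not the algebra. We need the evidence $\sum_k w_k\nu_k(x)$ to be positive, so that the posteriors are defined, and we need $\nu_j(x)>0$, so that the ratio makes sense. I would state these as standing conditions: assume $x$ has positive probability under $T_j$. If $\nu_j(x)=0$, the identity can instead be read in the cross-multiplied form
\begin{equation}
\mathbb P(T_i\mid x)\,2^{-L_j}\nu_j(x)=\mathbb P(T_j\mid x)\,2^{-L_i}\nu_i(x).
\end{equation}
I would also note that when $T_i$ is a parametric class rather than a single predictor, $\nu_i$ means the marginal likelihood. The same computation then applies verbatim, which is what later justifies using BIC as a proxy for $-\log_2\nu_i$.
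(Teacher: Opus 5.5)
Your proof is correct and matches what the paper intends. The paper gives no separate proof; it relies on the preceding sentence fixing the prior $w_i\propto 2^{-L_i}$, after which Bayes' rule and cancellation of the common evidence term and normalizer give the factorization, exactly as you do. Your well-definedness caveats ($\nu_j(x)>0$, positive evidence) are a harmless tightening the paper leaves implicit; strictly, a normalizer exists only when $\sum_k 2^{-L_k}<\infty$, which holds under Kraft's inequality or for finitely many candidates.
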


The remaining question is how to estimate the likelihood term from witness data.
The next theorem isolates the asymptotic logic.

\begin{theorem}[KL advantage eventually overcomes a fixed complexity penalty]
\label{thm:mdl}
Let $p$ and $q$ be computable distributions on a finite alphabet, and let $Y_1,Y_2,\dots\sim p$ i.i.d.
Define the two-part MDL score
\begin{equation}
\mathrm{MDL}_n(r):=K(r)-\log_2 r^{\otimes n}(Y_{1:n}).
\end{equation}
If $D_{\mathrm{KL}}(p\Vert q)=\delta>0$, then with probability $1$ there exists $N$ such that for all $n\ge N$,
\begin{equation}
\mathrm{MDL}_n(p)<\mathrm{MDL}_n(q).
\end{equation}
Moreover, the crossover scale is governed by
\begin{equation}
n\,\delta \gtrsim K(q)-K(p).
\label{eq:abstractcrossover}
\end{equation}
\end{theorem}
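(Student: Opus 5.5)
The plan is to write the difference of MDL scores as a sum of i.i.d.\ log-likelihood ratios and apply the strong law of large numbers. Concretely,
\begin{equation}
\mathrm{MDL}_n(q)-\mathrm{MDL}_n(p)
=
\bigl(K(q)-K(p)\bigr)+\sum_{t=1}^{n} Z_t,
\qquad
Z_t:=\log_2\frac{p(Y_t)}{q(Y_t)},
\end{equation}
where $K(p)$ and $K(q)$ are finite constants because $p$ and $q$ are computable. The claim then reduces to showing that $\sum_t Z_t\to+\infty$ almost surely fast enough to absorb a fixed constant.

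First I will handle supports. Since $\delta=D_{\mathrm{KL}}(p\Vert q)$ is given as a (finite) positive number, $p\ll q$: every $y$ with $p(y)>0$ has $q(y)>0$. Almost surely, each $Y_t$ lies in $\mathrm{supp}(p)$, so each $Z_t$ is a well-defined real number. The variable $Z_t$ takes finitely many values, so it is bounded and integrable, with $\mathbb E[Z_t]=\delta$.

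(If one also wants to allow $\delta=\infty$, the case where $\mathrm{supp}(p)\not\subseteq\mathrm{supp}(q)$ is easy. Almost surely some $Y_t$ eventually lands where $q=0$. From then on $-\log_2 q^{\otimes n}=+\infty$ while $\mathrm{MDL}_n(p)$ stays finite, so the inequality holds trivially for all later $n$.)

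Next, by the strong law, $\frac1n\sum_{t\le n}Z_t\to\delta$ almost surely. On this probability-one event, pick $N_0$ such that $\frac1n\sum_{t\le n} Z_t>\delta/2$ for $n\ge N_0$. Then
\begin{equation}
\mathrm{MDL}_n(q)-\mathrm{MDL}_n(p)>K(q)-K(p)+n\delta/2,
\end{equation}
which is positive once $n>2\,(K(p)-K(q))/\delta$. Taking $N$ to be the larger of $N_0$ and this threshold gives the almost-sure statement.

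For the crossover scale, I will read it off from the same decomposition. Since $\sum_t Z_t=n\delta+O_P(\sqrt n)$ by the CLT (or Hoeffding, because $Z_t$ is bounded), the sign of the score difference changes once $n\delta$ exceeds $|K(q)-K(p)|$ up to fluctuations of order $\sqrt n$. This is exactly the informal relation \eqref{eq:abstractcrossover}. The statement uses $\gtrsim$ precisely because only the leading-order scale is claimed.

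The main subtlety is not the limit argument but two side issues. The first is the support condition just discussed. The second is making precise what ``governed by'' means. I will present the rigorous content as the explicit sufficient condition $n\ge N_0$ together with $n\delta/2>K(p)-K(q)$. I will then remark that replacing $\delta/2$ by $\delta-\varepsilon$ gives the asymptotic threshold $n\approx (K(q)-K(p))/\delta$ in the relevant direction, i.e.\ when $p$ is the more complex model and must earn its extra code length.
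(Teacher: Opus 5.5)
Your argument for the almost-sure claim is essentially the paper's proof. You write $\mathrm{MDL}_n(q)-\mathrm{MDL}_n(p)$ as the constant $K(q)-K(p)$ plus an i.i.d.\ sum of log-likelihood ratios with mean $\delta$, and the strong law then makes the linear term beat the $O(1)$ code-length penalty. Your support discussion (including $\delta=\infty$) and the explicit threshold $n>2(K(p)-K(q))/\delta$ beyond $N_0$ are correct refinements that the paper leaves implicit.

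The part to fix is the crossover remark, where you force agreement with the printed display and introduce a sign slip. Your own decomposition shows that $p$ overtakes $q$ once $n\delta \gtrsim K(p)-K(q)$: the true but more complex model has to earn back its extra code length. Your two attempts to match the display both fail:
\begin{itemize}
\item The claimed threshold $n\approx (K(q)-K(p))/\delta$ ``when $p$ is the more complex model'' is negative in exactly that case.
\item The version with $|K(q)-K(p)|$ is also wrong. If $K(q)>K(p)$, the leading-order score difference $K(q)-K(p)+n\delta$ is positive for every $n$, so no complexity has to be paid back and there is no sign change.
\end{itemize}
As printed, \eqref{eq:abstractcrossover} appears to have its sign reversed relative to the definition of $\mathrm{MDL}_n$. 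The direction $K(p)-K(q)$ is the one consistent with the paper's practical rule \eqref{eq:practicalcrossover}, where the more expressive class $\mathcal M$ pays $\tfrac{1}{2}(d_{\mathcal M}-d_{\mathcal M_{\mathrm{ref}}})\log_2 n$. State the informal scale that way, rather than asserting that it is ``exactly'' the displayed relation.
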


\begin{proof}
The log-likelihood ratio
$\sum_{i=1}^n \log_2[p(Y_i)/q(Y_i)]$
has mean $n\delta$.
By the strong law of large numbers this grows linearly in $n$, whereas the coding penalty is $O(1)$.
Hence the MDL score of $p$ eventually beats that of $q$.
\end{proof}

For practical parametric fits we will use the familiar proxy
\begin{equation}
L_{\mathrm{BIC}}(\mathcal M;x_{1:n})
:=
-\log_2 \hat{\mathcal L}_{\mathcal M}(x_{1:n})
+\frac{d_{\mathcal M}}{2}\log_2 n,
\label{eq:bicdef}
\end{equation}
where $d_{\mathcal M}$ is an effective dimension.
If a witness supplies a conservative evidence rate $\delta_{\mathrm{wit}}$ in bits per trial against a simpler reference class $\mathcal M_{\mathrm{ref}}$, then a more expressive competitor $\mathcal M$ is conservatively justified once
\begin{equation}
n\,\delta_{\mathrm{wit}}
\gtrsim
\frac{d_{\mathcal M}-d_{\mathcal M_{\mathrm{ref}}}}{2}\log_2 n.
\label{eq:practicalcrossover}
\end{equation}
If the competitor is actually \emph{simpler} than the reference class, as happens below for the two-parameter cosine family versus the eight-dimensional local polytope, then Eq.~\eqref{eq:practicalcrossover} does not need to pay an extra complexity tax at all.
The crossover logic is most informative when the alternative class is broader, such as the saturated control versus locality.

\section{Witness-certified KL gaps}

\subsection{General witness coarse-graining theorem}

Let $Z$ denote the full trial alphabet and let $\mathcal C$ be any competitor class of distributions on $Z$.
A witness is a measurable coarse-graining $\pi:Z\to W$ to a finite alphabet $W$.
Write $\pi_\# P$ for the induced witness distribution and $\pi_\#\mathcal C:=\{\pi_\# Q:Q\in\mathcal C\}$ for the image of the competitor class.

\begin{theorem}[Witness coarse-graining lower bound]
\label{thm:coarsegraining}
For every distribution $P$ on $Z$,
\begin{equation}
\inf_{Q\in\mathcal C} D_{\mathrm{KL}}(P\Vert Q)
\ge
\inf_{R\in\pi_\#\mathcal C} D_{\mathrm{KL}}(\pi_\# P\Vert R).
\label{eq:generalcoarsegrain}
\end{equation}
\end{theorem}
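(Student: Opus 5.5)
The plan is to reduce the statement to the data-processing inequality for KL divergence, applied one competitor at a time, and then to pass to the infimum.

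Fix an arbitrary $Q\in\mathcal C$. The first step is to show
\begin{equation}
D_{\mathrm{KL}}(P\Vert Q)\ge D_{\mathrm{KL}}(\pi_\# P\Vert \pi_\# Q).
\end{equation}
If $D_{\mathrm{KL}}(P\Vert Q)=\infty$ there is nothing to prove. Otherwise $P\ll Q$, and hence $\pi_\#P\ll\pi_\#Q$, because $\pi_\#Q(w)=0$ forces $Q(\pi^{-1}(w))=0$ and therefore $P(\pi^{-1}(w))=0$. Since $W$ is finite, I partition $Z$ into the fibres $Z_w:=\pi^{-1}(w)$. On each fibre with $\pi_\#P(w)>0$, I write $P$ and $Q$ as the fibre mass times a conditional distribution. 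This gives the chain-rule decomposition
\begin{equation}
D_{\mathrm{KL}}(P\Vert Q)=D_{\mathrm{KL}}(\pi_\#P\Vert\pi_\#Q)+\sum_{w}\pi_\#P(w)\,D_{\mathrm{KL}}\bigl(P(\cdot\mid Z_w)\,\big\Vert\, Q(\cdot\mid Z_w)\bigr).
\end{equation}
The second term is nonnegative by Gibbs' inequality, which yields the displayed inequality.

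For a discrete $Z$ one could equally apply the log-sum inequality fibre by fibre. The chain-rule route has the advantage of also covering a general measurable $Z$, provided densities are taken with respect to $Q$.

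The second step handles the infimum. For each $Q\in\mathcal C$, the image $\pi_\#Q$ lies in $\pi_\#\mathcal C$. Therefore
\begin{equation}
D_{\mathrm{KL}}(P\Vert Q)\ge D_{\mathrm{KL}}(\pi_\#P\Vert\pi_\#Q)\ge\inf_{R\in\pi_\#\mathcal C}D_{\mathrm{KL}}(\pi_\#P\Vert R).
\end{equation}
The right-hand side does not depend on $Q$, so taking the infimum over $Q\in\mathcal C$ on the left gives Eq.~\eqref{eq:generalcoarsegrain}. If $\mathcal C$ is empty, both infima are $+\infty$ and the statement holds trivially. No compactness or attainment of the infimum is needed.

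The main obstacle is the per-$Q$ data-processing step in the general measurable setting. The work there lies in conditioning on fibres and in treating the infinite and zero-mass cases correctly. The reduction to the infimum is immediate once that step is in place.

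If needed, I would first prove the discrete case and then invoke the standard measure-theoretic data-processing inequality: KL divergence is nonincreasing under any Markov kernel, and a deterministic map $\pi$ is a special case of such a kernel. This route covers the Bell applications, where $Z$ is the finite set of (input, output) tuples.
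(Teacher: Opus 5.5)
Your proof is correct and matches the paper's argument: apply data processing for KL to each fixed $Q\in\mathcal C$, note that $\pi_\# Q\in\pi_\#\mathcal C$, and take the infimum over $Q$. The only difference is that you also derive the data-processing inequality via the chain rule over the fibres $\pi^{-1}(w)$, whereas the paper simply cites it.
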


\begin{proof}
For any fixed $Q\in\mathcal C$, data processing for KL divergence gives
$D_{\mathrm{KL}}(P\Vert Q)\ge D_{\mathrm{KL}}(\pi_\# P\Vert \pi_\# Q)$.
Taking the infimum over $Q\in\mathcal C$ yields Eq.~\eqref{eq:generalcoarsegrain}.
\end{proof}

The theorem is general.
Its usefulness depends on how explicitly one can characterize the image set $\pi_\#\mathcal C$.
CHSH is convenient because the win/loss coarse-graining makes the image of the local class one-dimensional.
But the theorem itself is not limited to CHSH, to two outcomes, or even to two parties.

A particularly important special case is any binary Bell game.

\begin{corollary}[Binary Bell-game witnesses]
\label{cor:binarygame}
Let $W\in\{0,1\}$ be the win indicator of any Bell game, and let
\begin{equation}
\omega_{\mathrm{loc}}:=\sup_{Q\in\loc}\mathbb E_Q[W]
\end{equation}
be the local value of the game.
Then for every observed distribution $P$,
\begin{equation}
\inf_{Q\in\loc} D_{\mathrm{KL}}(P\Vert Q)
\ge
D_{\mathrm{KL}}\!\left(\mathrm{Bern}(\omega(P))\,\middle\|\,\mathrm{Bern}(\omega_{\mathrm{loc}})\right),
\label{eq:binarygamebound}
\end{equation}
where $\omega(P):=\mathbb P_P(W=1)$.
\end{corollary}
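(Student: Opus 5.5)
The plan is to obtain the corollary as a direct specialization of Theorem~\ref{thm:coarsegraining}, taking $\mathcal C=\loc$ and letting the coarse-graining $\pi$ be the win indicator itself. A trial $z=(x,y,\dots,a,b,\dots)$ is mapped to $\pi(z)=W(z)\in\{0,1\}$. Here $W$ is a deterministic function of the inputs and outputs, and the input distribution is the fixed design, so $\pi$ is a legitimate coarse-graining of the full trial alphabet. Then $\pi_\#P=\mathrm{Bern}(\omega(P))$, and for every $Q\in\loc$ we have $\pi_\#Q=\mathrm{Bern}(\mathbb E_Q[W])$. Theorem~\ref{thm:coarsegraining} gives
\begin{equation*}
\inf_{Q\in\loc}D_{\mathrm{KL}}(P\Vert Q)\ge \inf_{R\in\pi_\#\loc}D_{\mathrm{KL}}(\mathrm{Bern}(\omega(P))\Vert R).
\end{equation*}

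Next I bound the image set from outside. By the definition of $\omega_{\mathrm{loc}}$, every $Q\in\loc$ has $\mathbb E_Q[W]\le\omega_{\mathrm{loc}}$. Hence
\begin{equation*}
\pi_\#\loc\subseteq\{\mathrm{Bern}(r):0\le r\le\omega_{\mathrm{loc}}\}.
\end{equation*}
An infimum over a superset can only be smaller, so the right-hand side above is at least
\begin{equation*}
\inf_{0\le r\le\omega_{\mathrm{loc}}}D_{\mathrm{KL}}(\mathrm{Bern}(\omega)\Vert\mathrm{Bern}(r)),\qquad \omega:=\omega(P).
\end{equation*}
This step does not require knowing the image exactly. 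In particular, it does not matter whether the supremum defining $\omega_{\mathrm{loc}}$ is attained or whether the image is the whole interval.

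The remaining step is a one-dimensional minimization, and it is where the only real care is needed. Write $f(r)=\omega\log_2(\omega/r)+(1-\omega)\log_2\bigl((1-\omega)/(1-r)\bigr)$ for fixed $\omega$. Differentiating gives $f'(r)=(r-\omega)/\bigl(r(1-r)\ln 2\bigr)$ on $(0,1)$. So $f$ is strictly decreasing on $(0,\omega]$, and at the endpoint $f(0)=+\infty$ unless $\omega=0$. Therefore, when $\omega(P)\ge\omega_{\mathrm{loc}}$, the infimum over $r\in[0,\omega_{\mathrm{loc}}]$ is attained at $r=\omega_{\mathrm{loc}}$. This yields Eq.~\eqref{eq:binarygamebound}. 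Boundary cases such as $\omega_{\mathrm{loc}}=0$, or $\omega=1$ with the value $+\infty$, are handled by the usual conventions $0\log 0=0$ and $\log(c/0)=+\infty$.

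The main obstacle is the regime $\omega(P)<\omega_{\mathrm{loc}}$. There the constrained minimizer is $r=\omega$ and the infimum is $0$. Indeed, a local $P$ with sub-maximal win rate has zero left-hand side, while the displayed right-hand side would be strictly positive, so Eq.~\eqref{eq:binarygamebound} cannot hold literally as written. I would therefore state and prove the corollary in the nontrivial regime $\omega(P)\ge\omega_{\mathrm{loc}}$, where a witness violation is actually observed. Equivalently, the right-hand side should be read as
\begin{equation*}
D_{\mathrm{KL}}\bigl(\mathrm{Bern}(\omega)\Vert\mathrm{Bern}(\omega_{\mathrm{loc}})\bigr)\,\mathbf 1[\omega\ge\omega_{\mathrm{loc}}],
\end{equation*}
and with that reading the bound holds for every $P$ by the argument above.
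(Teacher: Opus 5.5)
Your argument is correct and follows the paper's own route: apply Theorem~\ref{thm:coarsegraining} with $\pi$ the win/loss map and observe that every local image is $\mathrm{Bern}(r)$ with $r\le\omega_{\mathrm{loc}}$, at which point the paper stops, tacitly using that $r\mapsto D_{\mathrm{KL}}(\mathrm{Bern}(\omega)\Vert\mathrm{Bern}(r))$ is decreasing on $[0,\omega]$. Your explicit minimization supplies that missing step, and you are right that it requires $\omega(P)\ge\omega_{\mathrm{loc}}$ (a local $P$ with sub-maximal win rate violates the statement as written), which is exactly the regime the paper enforces in Proposition~\ref{prop:finitewitness} by clipping $\omega^-_\alpha$ at $\omega_{\mathrm{loc}}(r)$, so your indicator-restricted version is the correct reading.
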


\begin{proof}
Apply Theorem~\ref{thm:coarsegraining} with $\pi$ equal to the binary win/loss map.
The image of any local model is a Bernoulli law with success parameter at most $\omega_{\mathrm{loc}}$.
\end{proof}

Corollary~\ref{cor:binarygame} already covers many witnesses beyond CHSH, including multipartite or multi-outcome nonlocal games with binary scoring.
For more general linear Bell functionals one can still apply Theorem~\ref{thm:coarsegraining}, but the push-forward image $\pi_\#\loc$ becomes higher-dimensional and the closed-form simplification is lost.
Thus CHSH is special mainly because the coarse witness image is so tractable.

\subsection{Worked example beyond CHSH: the three-party Mermin--GHZ game}

To make the generality claim concrete, consider the standard three-party GHZ game \cite{mermin1990extreme}.
Inputs $(x,y,z)\in\{0,1\}^3$ are drawn uniformly from the even-parity set
\begin{equation}
\mathcal X_{\mathrm{GHZ}}:=\{000,011,101,110\},
\end{equation}
and outputs $(a,b,c)\in\{0,1\}^3$ win when
\begin{equation}
a\oplus b\oplus c = x\lor y\lor z.
\end{equation}
Equivalently, the parity of the outputs should be $0$ for input $000$ and $1$ for the other three allowed inputs.
A deterministic local strategy cannot satisfy all four parity constraints simultaneously, because multiplying the first three constraints forces the fourth to fail.
Hence every local strategy wins on at most three of the four allowed inputs, so the local value is $\omega_{\mathrm{loc}}^{\mathrm{GHZ}}=3/4$.

\begin{proposition}[Mermin--GHZ Bernoulli certificate]
\label{prop:ghzgame}
Let $W_{\mathrm{GHZ}}\in\{0,1\}$ be the win indicator of the three-party GHZ game under the uniform input distribution on $\mathcal X_{\mathrm{GHZ}}$.
For every observed distribution $P$ on inputs and outputs,
\begin{equation}
\inf_{Q\in\loc} D_{\mathrm{KL}}(P\Vert Q)
\ge
D_{\mathrm{KL}}\!\left(\mathrm{Bern}(\omega_{\mathrm{GHZ}}(P))\,\middle\|\,\mathrm{Bern}(3/4)\right),
\end{equation}
where $\omega_{\mathrm{GHZ}}(P):=\mathbb P_P(W_{\mathrm{GHZ}}=1)$.
\end{proposition}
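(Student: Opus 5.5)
The plan is to reduce the statement to Corollary~\ref{cor:binarygame}, which already gives the bound for any binary Bell game once its local value is known. Here the win/loss map $\pi:(x,y,z,a,b,c)\mapsto W_{\mathrm{GHZ}}$ is a deterministic coarse-graining of the full trial alphabet. So what remains is to show that $\omega_{\mathrm{loc}}:=\sup_{Q\in\loc}\mathbb E_Q[W_{\mathrm{GHZ}}]=3/4$ under the uniform design on $\mathcal X_{\mathrm{GHZ}}$. One further point is needed: the Bernoulli divergence $D_{\mathrm{KL}}(\mathrm{Bern}(\omega)\Vert\mathrm{Bern}(r))$ must attain its infimum over $r\le 3/4$ at the boundary $r=3/4$ in the nontrivial regime $\omega>3/4$.

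\textbf{Local value.} Every $Q\in\loc$ has the same input marginal (uniform on $\mathcal X_{\mathrm{GHZ}}$). Conditioned on inputs, its behavior is a convex mixture of deterministic strategies $a=f(x)$, $b=g(y)$, $c=h(z)$. Since $\mathbb E_Q[W]$ is linear in the mixture, the supremum is attained at a deterministic strategy, and it suffices to count how many of the four parity constraints one such strategy can satisfy. Write $a_x=f(x)$ and similarly $b_y$, $c_z$. The constraints are
\[
a_0\oplus b_0\oplus c_0=0,\quad a_0\oplus b_1\oplus c_1=1,\quad a_1\oplus b_0\oplus c_1=1,\quad a_1\oplus b_1\oplus c_0=1 .
\]
Summing all four mod $2$ makes the left side vanish, since each variable appears exactly twice, while the right side is $1$. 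So at most three constraints hold simultaneously. Setting all variables to $0$ except $a_1=b_1=c_1=1$ satisfies three of them, which gives $\omega_{\mathrm{loc}}=3/4$.

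\textbf{Image set.} The push-forward $\pi_\#\loc$ consists of laws $\mathrm{Bern}(r)$ with $r\in[0,3/4]$, and by the previous step every such $r$ is achievable (mix a winning-on-three deterministic strategy with a losing one). Theorem~\ref{thm:coarsegraining} then gives
\[
\inf_{Q\in\loc}D_{\mathrm{KL}}(P\Vert Q)\ge\inf_{r\le 3/4}D_{\mathrm{KL}}\bigl(\mathrm{Bern}(\omega_{\mathrm{GHZ}}(P))\Vert\mathrm{Bern}(r)\bigr).
\]
The map $r\mapsto D_{\mathrm{KL}}(\mathrm{Bern}(\omega)\Vert\mathrm{Bern}(r))$ is convex in $r$ with its minimum $0$ at $r=\omega$. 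Two cases follow:
\begin{itemize}
\item If $\omega>3/4$, the function is decreasing on $[0,3/4]$, so the infimum equals the value at $r=3/4$, as claimed.
\item If $\omega\le 3/4$, the infimum is $0$. The stated right-hand side is then positive when $\omega<3/4$, so the proposition must be read with the usual convention that the certificate is applied only when $\omega>3/4$, or with the Bernoulli term replaced by $0$ otherwise, exactly as in Corollary~\ref{cor:binarygame}.
\end{itemize}

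\textbf{Main obstacle.} The only substantive step is the local-value computation, and it is settled by the parity-sum argument. I expect the more delicate issue to be this regime caveat for $\omega\le3/4$. I would handle it by stating the bound for the relevant regime $\omega_{\mathrm{GHZ}}(P)\ge 3/4$, inherited from the corollary's proof.
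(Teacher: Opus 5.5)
Your argument follows the paper's route: reduce to Corollary~\ref{cor:binarygame}, bound the local value by the parity contradiction, and extend to local mixtures by linearity. The part that carries the proof is correct, since the inequality needs only $\omega_{\mathrm{loc}}\le 3/4$ plus monotonicity of the Bernoulli divergence for $\omega\ge 3/4$. Your regime caveat is also right. The paper leaves it implicit, inheriting it from the corollary: for a local $P$ with $\omega_{\mathrm{GHZ}}(P)<3/4$ the left side vanishes while the right side is positive. The bound must therefore be restricted to $\omega_{\mathrm{GHZ}}(P)\ge 3/4$.

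Two side claims are wrong, though neither affects the inequality. First, your example $a_x=x$, $b_y=y$, $c_z=z$ gives output parity $x\oplus y\oplus z=0$ on every allowed input, so it wins only on $000$. A strategy that wins three of four is $a\equiv 1$, $b\equiv c\equiv 0$. Second, no deterministic strategy loses on all four inputs. Your own parity sum shows the number of violated constraints is always odd, so local win rates fill $[1/4,3/4]$ and $\pi_\#\loc=\{\mathrm{Bern}(r):1/4\le r\le 3/4\}$, not all $r\le 3/4$. Your mislabelled strategy is in fact a concrete counterexample to the unrestricted statement: it is local with $\omega=1/4$, and $D_{\mathrm{KL}}(\mathrm{Bern}(1/4)\Vert\mathrm{Bern}(3/4))=\tfrac12\ln 3>0$.
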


\begin{proof}
Corollary~\ref{cor:binarygame} then applies once the local value of the GHZ game is identified as $3/4$.
The contradiction argument above shows that no deterministic local strategy can win on all four allowed inputs, while strategies saturating three of the four constraints exist.
Convexity then implies the same bound for all local mixtures.
\end{proof}

Proposition~\ref{prop:ghzgame} does not yet solve the full higher-party model-selection problem, but it removes a possible misunderstanding: the witness-to-KL-to-complexity pipeline is not intrinsically tied to bipartite CHSH.
In this respect CHSH mainly offers a convenient closed form and a simple empirical test bed.

\subsection{CHSH with arbitrary known design}

Consider a CHSH experiment with settings $(X,Y)\in\{0,1\}^2$ and outputs $(A,B)\in\{\pm1\}^2$.
Let $r_{xy}=\mathbb P(X=x,Y=y)$ be a known design distribution, not necessarily uniform.
For the standard sign pattern $g(0,0)=-1$ and $g=+1$ otherwise, define the win indicator
\begin{equation}
W:=\mathbf 1\{AB=g(X,Y)\}.
\end{equation}
Let $P_r$ denote the joint distribution on $(X,Y,A,B)$ induced by the design $r$ and the conditional table $P(a,b\mid x,y)$.
Define the design-averaged CHSH win rate
\begin{equation}
\omega_r(P):=\mathbb E_{P_r}[W].
\end{equation}

\begin{theorem}[CHSH witness lower-bounds the KL gap to locality]
\label{thm:chshdesign}
For every local model $Q$ with the same design distribution $r$,
\begin{equation}
D_{\mathrm{KL}}(P_r\Vert Q_r)
\ge
D_{\mathrm{KL}}\!\left(\mathrm{Bern}(\omega_r(P))\,\middle\|\,\mathrm{Bern}(\omega_r(Q))\right).
\end{equation}
Moreover,
\begin{equation}
\omega_r(Q)\le \omega_{\mathrm{loc}}(r):=1-\min_{x,y} r_{xy},
\label{eq:weightedlocalbound}
\end{equation}
so that
\begin{equation}
\inf_{Q\in\loc} D_{\mathrm{KL}}(P_r\Vert Q_r)
\ge
D_{\mathrm{KL}}\!\left(\mathrm{Bern}(\omega_r(P))\,\middle\|\,\mathrm{Bern}(\omega_{\mathrm{loc}}(r))\right).
\label{eq:chshgeneralbound}
\end{equation}
Under uniform inputs $r_{xy}=1/4$, one has $\omega_{\mathrm{loc}}(r)=3/4$ and
\begin{equation}
\omega_r(P)=\frac12+\frac18 S(P),
\end{equation}
where
$S(P)=|-E_{00}+E_{01}+E_{10}+E_{11}|$.
\end{theorem}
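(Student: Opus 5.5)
The first inequality is a direct instance of Theorem~\ref{thm:coarsegraining}, or more precisely of the data-processing step in its proof. I would take $Z=\{0,1\}^2\times\{\pm1\}^2$ and let $\pi$ be the win/loss map $(x,y,a,b)\mapsto \mathbf 1\{ab=g(x,y)\}$. For a fixed local $Q$ sharing the design $r$, data processing gives $D_{\mathrm{KL}}(P_r\Vert Q_r)\ge D_{\mathrm{KL}}(\pi_\#P_r\Vert \pi_\#Q_r)$. The pushforwards are $\mathrm{Bern}(\omega_r(P))$ and $\mathrm{Bern}(\omega_r(Q))$ by the definition of $\omega_r$. It is important that $P$ and $Q$ carry the same input marginal $r$. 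Otherwise the KL divergence would also pick up an input-mismatch term, and the win rate would not be computed under a common design.

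Next comes the local bound \eqref{eq:weightedlocalbound}. The map $Q\mapsto \omega_r(Q)=\sum_{x,y}r_{xy}\,Q(ab=g(x,y)\mid x,y)$ is linear in the conditional table. It therefore suffices to check deterministic local strategies $a=\alpha(x)$, $b=\beta(y)$ and then extend by convexity, as in Proposition~\ref{prop:ghzgame}. For a deterministic strategy, the product of the four quantities $\alpha(x)\beta(y)g(x,y)$ over $(x,y)\in\{0,1\}^2$ equals $\prod g=-1$, since each $\alpha(x)$ and $\beta(y)$ appears twice. So at least one of the four constraints fails, and the win rate is at most $1-r_{xy}$ for some $(x,y)$, which is $\le 1-\min r_{xy}$.

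For \eqref{eq:chshgeneralbound} I need monotonicity. For $p>q'\ge q$, the function $q\mapsto D_{\mathrm{KL}}(\mathrm{Bern}(p)\Vert\mathrm{Bern}(q))$ is decreasing on $q\le p$. Hence $D(\mathrm{Bern}(\omega_r(P))\Vert\mathrm{Bern}(\omega_r(Q)))\ge D(\mathrm{Bern}(\omega_r(P))\Vert\mathrm{Bern}(\omega_{\mathrm{loc}}(r)))$ whenever $\omega_r(P)\ge\omega_{\mathrm{loc}}(r)$. Taking the infimum over $Q$ then gives the claim.

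This monotonicity step is the main subtlety. When $\omega_r(P)<\omega_{\mathrm{loc}}(r)$ the right-hand side as written is positive but the true bound is $0$, since some local $Q$ attains the same win rate. I would therefore state the bound with the convention that the Bernoulli divergence is replaced by $0$ (its projection onto the half-interval $[0,\omega_{\mathrm{loc}}]$) in the nonviolating regime. Equivalently, I would restrict to $\omega_r(P)>\omega_{\mathrm{loc}}(r)$, which is the case of interest.

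For the uniform specialization, $\omega_{\mathrm{loc}}=1-1/4=3/4$. Writing $Q(ab=g\mid x,y)=(1+g(x,y)E_{xy})/2$ gives $\omega=\tfrac12+\tfrac18\sum_{x,y}g(x,y)E_{xy}$. With the stated sign pattern the sum is $-E_{00}+E_{01}+E_{10}+E_{11}$, and it equals $S(P)$ when that expression is nonnegative. The absolute value in $S$ should be read with the sign convention chosen so that the violation is positive, for instance by relabelling outputs.
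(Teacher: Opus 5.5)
Your proposal is correct and follows essentially the paper's route: data processing on the win/loss coarse-graining (the paper simply cites Corollary~\ref{cor:binarygame}), then the bound $1-\min_{x,y}r_{xy}$ for deterministic local strategies extended to all local models by convexity, then the standard linear relation between the CHSH combination and the win rate. Your explicit monotonicity step and your two caveats are genuine repairs that the paper's short proof skips. As literally stated, Eq.~\eqref{eq:chshgeneralbound} fails for a local $P$ with $\omega_r(P)<\omega_{\mathrm{loc}}(r)$: uniformly random outputs give a left side of $0$ but a right side of $\tfrac12\ln\tfrac43>0$. Likewise, $\omega_r(P)=\tfrac12+\tfrac18 S(P)$ with the absolute value holds only when $-E_{00}+E_{01}+E_{10}+E_{11}\ge 0$.
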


\begin{proof}
The first inequality is Corollary~\ref{cor:binarygame}.
For CHSH every deterministic local strategy loses on at least one setting and can choose which setting that is; therefore its weighted win probability is at most $1-\min_{x,y}r_{xy}$.
Taking convex hulls preserves this bound, giving Eq.~\eqref{eq:weightedlocalbound}.
For uniform inputs the standard relation between CHSH score and win probability yields the final statement.
\end{proof}

Theorem~\ref{thm:chshdesign} clarifies two points that are easy to blur in shorter notes.
First, the Bernoulli certificate is not intrinsically tied to uniform randomization; a known nonuniform design simply changes the local benchmark from $3/4$ to $1-\min r_{xy}$.
Second, nonuniform designs weaken the witness certificate by raising the local benchmark, which is why nearly uniform inputs are operationally valuable.
Figure~\ref{fig:designrobustness} visualizes this degradation.

Measured in bits per trial, the CHSH witness certificate is
\begin{equation}
\delta_{\mathrm{CHSH},r}(\omega)
:=
\frac{1}{\log 2}
D_{\mathrm{KL}}\!\left(\mathrm{Bern}(\omega)\,\middle\|\,\mathrm{Bern}(\omega_{\mathrm{loc}}(r))\right).
\label{eq:deltachsh}
\end{equation}
Combining this with Eq.~\eqref{eq:practicalcrossover} yields a conservative crossover rule.
Figure~\ref{fig:ncrit} shows the resulting $n_{\mathrm{crit}}$ curves for several complexity gaps $\Delta d=d_{\mathcal M}-d_{\loc}$.

\begin{figure}[t]
\centering
\includegraphics[width=\linewidth]{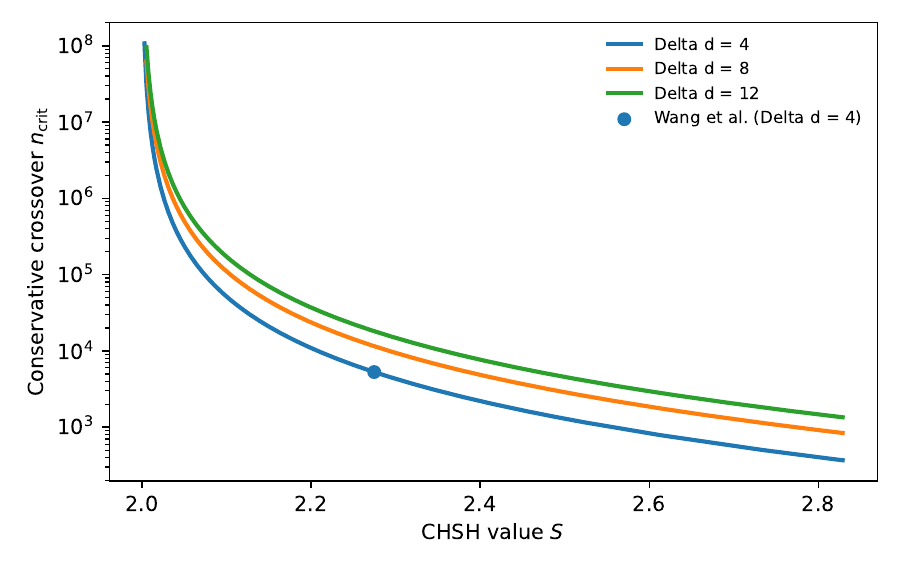}
\caption{Witness-certified crossover scales for several complexity gaps.
For each CHSH value $S$ we compute the Bernoulli KL lower bound
$\delta_{\mathrm{CHSH}}(S)=D_{\mathrm{KL}}(\mathrm{Bern}(\frac12+S/8)\Vert \mathrm{Bern}(3/4))/\log 2$
and then solve
$n\,\delta_{\mathrm{CHSH}}(S)\ge (\Delta d/2)\log_2 n$.
The three curves correspond to candidate competitors whose effective dimensions exceed the local polytope by $\Delta d\in\{4,8,12\}$.
The Wang point is marked for $\Delta d=4$, corresponding to the local-to-saturated comparison used later as the prototypical ``pay extra complexity'' benchmark.
}
\label{fig:ncrit}
\end{figure}

\begin{figure}[t]
\centering
\includegraphics[width=\linewidth]{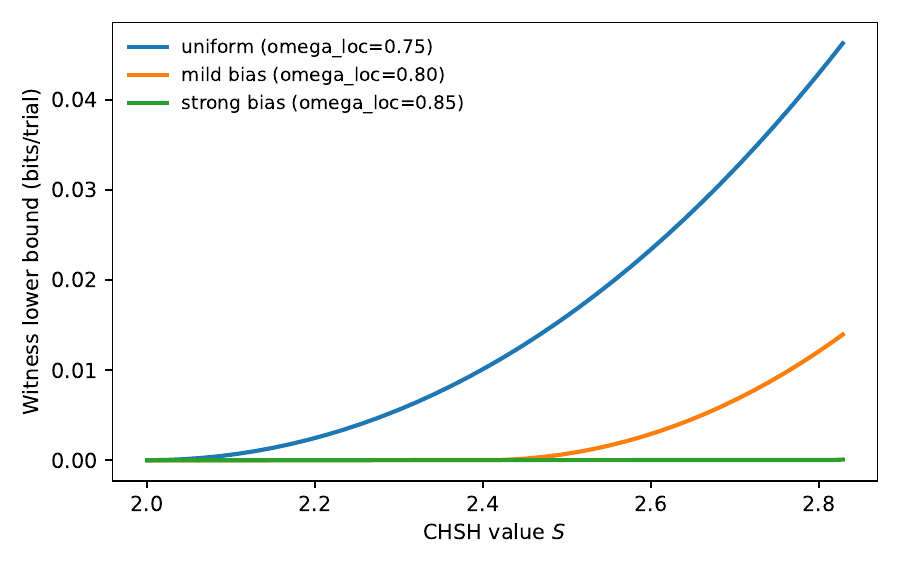}
\caption{Design robustness of the CHSH witness certificate.
The plotted curves show the witness lower bound in bits per trial as a function of $S$ for a uniform design and for two biased input designs.
As the smallest input weight decreases, the local benchmark $\omega_{\mathrm{loc}}(r)=1-\min r_{xy}$ rises and the same observed CHSH value certifies fewer bits.
Thus input imbalance weakens the witness-level evidence even before any full-table fit is performed.
}
\label{fig:designrobustness}
\end{figure}

\subsection{Finite-sample certificate and assumption boundary}

The witness certificate also admits a finite-sample confidence version.
The key observation is that Hoeffding's inequality only needs independence and boundedness, not identical distribution.
Thus a predetermined setting schedule is allowed as long as the design distribution is known.

\begin{proposition}[Finite-sample witness certificate for independent trials]
\label{prop:finitewitness}
Assume independent CHSH trials with known design $r$ and let $\hat\omega_n$ be the empirical mean of the win indicators.
For confidence level $1-\alpha$, define
\begin{equation}
\varepsilon_n(\alpha):=\sqrt{\frac{\ln(1/\alpha)}{2n}},
\qquad
\omega^-_\alpha:=\max\!\{\omega_{\mathrm{loc}}(r),\hat\omega_n-\varepsilon_n(\alpha)\}.
\end{equation}
Then with probability at least $1-\alpha$,
\begin{equation}
\inf_{Q\in\loc} D_{\mathrm{KL}}(P_r\Vert Q_r)
\ge
D_{\mathrm{KL}}\!\left(\mathrm{Bern}(\omega^-_\alpha)\,\middle\|\,\mathrm{Bern}(\omega_{\mathrm{loc}}(r))\right).
\label{eq:finitewitnesskl}
\end{equation}
\end{proposition}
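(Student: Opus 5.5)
The plan has three parts: a Hoeffding lower confidence bound on the true win rate, monotonicity of the Bernoulli KL divergence on the half-line above the local benchmark, and Theorem~\ref{thm:chshdesign} to pass from the win rate to the full-table KL gap. Throughout, $\omega:=\omega_r(P)$ is the true design-averaged win rate. For independent trials with the known design, it equals the average of the per-trial win probabilities $\frac1n\sum_i \mathbb E[W_i]$. If the schedule is predetermined, each $\mathbb E[W_i]$ is the conditional win probability at that trial's fixed setting, and their average is $\omega_r(P)$ because the setting frequencies match $r$. If the settings are randomized, each $\mathbb E[W_i]=\omega_r(P)$ directly.

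\emph{Step 1 (confidence event).} The $W_i\in\{0,1\}$ are independent and bounded, so the one-sided Hoeffding inequality gives
\begin{equation}
\mathbb P\bigl(\hat\omega_n-\omega\ge \varepsilon\bigr)\le e^{-2n\varepsilon^2}.
\end{equation}
Choosing $\varepsilon=\varepsilon_n(\alpha)$ makes the right-hand side equal to $\alpha$. Hence the event $E:=\{\omega\ge \hat\omega_n-\varepsilon_n(\alpha)\}$ has probability at least $1-\alpha$, and the rest of the argument is carried out on $E$.

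\emph{Step 2 (monotonicity).} Set $q:=\omega_{\mathrm{loc}}(r)$ and $f(x):=D_{\mathrm{KL}}(\mathrm{Bern}(x)\Vert\mathrm{Bern}(q))$. A direct derivative gives $f'(x)=\ln\frac{x(1-q)}{(1-x)q}$, which is $\ge0$ for $x\ge q$, so $f$ is nondecreasing on $[q,1]$. There are two cases.
\begin{itemize}
\item If $\hat\omega_n-\varepsilon_n\le q$, then $\omega^-_\alpha=q$ and the right-hand side of Eq.~\eqref{eq:finitewitnesskl} is $f(q)=0$. The claim then holds trivially, since KL divergence is nonnegative.
\item Otherwise, on $E$ we have $\omega\ge\hat\omega_n-\varepsilon_n=\omega^-_\alpha>q$. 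Monotonicity then gives $f(\omega)\ge f(\omega^-_\alpha)$.
\end{itemize}

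\emph{Step 3 (combine).} In the second case $\omega>q$, so Eq.~\eqref{eq:chshgeneralbound} of Theorem~\ref{thm:chshdesign} applies and gives $\inf_{Q\in\loc}D_{\mathrm{KL}}(P_r\Vert Q_r)\ge f(\omega)$. Combined with Step 2, this yields Eq.~\eqref{eq:finitewitnesskl} on $E$.

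The main obstacle is the non-i.i.d.\ scheduled case. One must be careful that the quantity Hoeffding controls, the mean of the per-trial win probabilities, coincides with the $\omega_r(P)$ appearing in Theorem~\ref{thm:chshdesign}. For a fixed schedule I would \emph{define} $r$ as the empirical setting frequencies of that schedule, so the identity holds exactly. I would also note that if the device behaviour varied across trials, $P$ would have to be read as the trial-averaged conditional table. Since the per-trial KL is convex in the pair of distributions, the bound for the averaged table still lower-bounds the averaged per-trial KL.
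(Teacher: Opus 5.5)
Your proof is correct and follows the paper's own argument: a one-sided Hoeffding bound on the trial-averaged win rate, then monotonicity of $x\mapsto D_{\mathrm{KL}}(\mathrm{Bern}(x)\Vert\mathrm{Bern}(\omega_{\mathrm{loc}}(r)))$ on $[\omega_{\mathrm{loc}}(r),1]$, then Theorem~\ref{thm:chshdesign}. Your explicit case split is a worthwhile refinement the paper's one-line proof leaves implicit: Eq.~\eqref{eq:chshgeneralbound} is only valid when $\omega_r(P)\ge\omega_{\mathrm{loc}}(r)$, since for a local $P$ with a smaller win rate its left side is zero while its right side is positive.
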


\begin{proof}
Hoeffding's inequality gives
$\mathbb P(\omega_r(P)<\hat\omega_n-\varepsilon_n(\alpha))\le \alpha$.
Since the Bernoulli KL divergence to a fixed benchmark is nondecreasing for $p\ge \omega_{\mathrm{loc}}(r)$, the claim follows from Theorem~\ref{thm:chshdesign}.
\end{proof}

Proposition~\ref{prop:finitewitness} is theorem-level rigorous, but its application requires a genuine trial model.
This distinction matters in the case study below.
The appendix table of Wang \emph{et al.} records coincidence counts reconstructed from separate phase settings rather than a raw shot-by-shot event sequence.
Accordingly, the asymptotic witness rate is perfectly well defined for the induced conditional table, but the finite-sample statement can only be interpreted there as an \emph{effective-trial} approximation.
Theorem and application should therefore be read at different levels: the theorem certifies a finite-sample lower bound for independent trials with known design, whereas the public Wang table supports only an approximate back-of-the-envelope translation of that theorem.
That is not a flaw in the theorem; it is a limitation of the available public data representation.

\section{Empirical studies}

\subsection{Wang \emph{et al.} reanalysis and witness extraction}

We illustrate the framework on the publicly reported four-photon data of Wang \emph{et al.}, who observed a CHSH violation in a ``frustrated interference'' experiment with unentangled photons and listed the coincidence table in their appendix \cite{wang2025violation}.
Before giving numbers, it is worth making the theorem/application boundary explicit once more.
For this dataset, the asymptotic witness rate is rigorously defined for the reconstructed conditional table.
By contrast, the finite-sample confidence interpretation is only an effective-trial approximation because the public appendix reports blocked coincidence counts rather than an independent shot-by-shot trial stream.
The raw table is reproduced in Table~\ref{tab:wangcounts}.
Following the authors, the four outcomes for a setting pair $(\alpha,\beta)$ are reconstructed by a $\pi$ phase shift:
\begin{align}
n_{++\mid\alpha\beta}&:=N(\alpha,\beta), &
 n_{+-\mid\alpha\beta}&:=N(\alpha,\beta+\pi),\nonumber\\
n_{-+\mid\alpha\beta}&:=N(\alpha+\pi,\beta), &
 n_{--\mid\alpha\beta}&:=N(\alpha+\pi,\beta+\pi).
\label{eq:countsmap}
\end{align}

\begin{table}[t]
\caption{Four-fold coincidence counts $N(\alpha,\beta)$ (integration time $60\,\mathrm{s}$) reproduced from Ref.~\cite{wang2025violation}.
Columns correspond to $\beta_1=\pi/4$, $\beta_1+\pi=5\pi/4$, $\beta_2=3\pi/4$, and $\beta_2+\pi=7\pi/4$.
Rows correspond to $\alpha_1=0$, $\alpha_1+\pi=\pi$, $\alpha_2=\pi/2$, and $\alpha_2+\pi=3\pi/2$.}
\label{tab:wangcounts}
\begin{ruledtabular}
\begin{tabular}{ccccc}
 & $\beta_1$ & $\beta_1+\pi$ & $\beta_2$ & $\beta_2+\pi$\\
\hline
$\alpha_1$ & 77  & 295 & 271 & 63\\
$\alpha_1+\pi$ & 371 & 58  & 107 & 364\\
$\alpha_2$ & 331 & 139 & 333 & 54\\
$\alpha_2+\pi$ & 129 & 327 & 97  & 296\\
\end{tabular}
\end{ruledtabular}
\end{table}

From the reconstructed conditional table we obtain
\begin{equation}
\widehat S=2.274548,
\qquad
\sigma_{\widehat S}\approx 0.0567,
\end{equation}
where the uncertainty is estimated by Poisson Monte Carlo.
Equivalently,
\begin{equation}
\widehat\omega=\frac12+\frac{\widehat S}{8}\approx 0.784318.
\end{equation}
The witness-certified asymptotic information rate against locality is therefore
\begin{equation}
\delta_{\mathrm{CHSH}}(\widehat\omega)
\approx 0.004681\ \text{bits/trial}.
\label{eq:wangdeltahat}
\end{equation}
If the total coincidence count $n=3312$ is read as an effective trial count, this corresponds to about $15.5\bits$ of asymptotic witness evidence against the local polytope.
The 95\% finite-sample lower certificate from Proposition~\ref{prop:finitewitness} becomes
\begin{equation}
\omega^-_{0.05}\approx 0.763052,
\qquad
\delta^-_{\mathrm{CHSH}}\approx 6.63\times 10^{-4}\ \text{bits/trial},
\end{equation}
namely about $2.2\bits$ over the full table.
Because the public data are coincidence tables reconstructed from separate integrations, this finite-sample figure should be interpreted as an effective-trial illustration rather than a theorem-level confidence statement about the original experiment.

\subsection{Published witness-only checks beyond the Wang table}

To supplement the Wang coincidence-table analysis, we also compute witness-only certificates from several published CHSH experiments that report a central $S$ value but not a full conditional table suitable for reanalysis.
For these cases we use the standard uniform-design conversion $\omega=\frac12+S/8$ and evaluate the CHSH witness rate $\delta_{\mathrm{CHSH}}(S)$ from the reported central value alone.
This analysis is lighter than the Wang reanalysis: it is not a substitute for a full-table fit, but it does indicate how the witness-to-bits map behaves across distinct experimental platforms.

\begin{table*}[t]
\caption{Witness-only benchmarks from additional published CHSH experiments.
The witness rate is the uniform-design CHSH certificate $\delta_{\mathrm{CHSH}}(S)=D_{\mathrm{KL}}(\mathrm{Bern}(\frac12+S/8)\Vert \mathrm{Bern}(3/4))/\log 2$ evaluated at the reported central $S$ value.
These rows are external calibration points rather than full reanalyses, because the underlying full conditional tables are not reconstructed here. For Hensen \emph{et al.}, the listed value is the combined value reported in Ref.~\cite{hensen2016second}.}
\label{tab:publishedbenchmarks}
\begin{ruledtabular}
\begin{tabular}{lccc}
Experiment & Reported $S$ & Implied $\omega$ & $\delta_{\mathrm{CHSH}}$ (bits/trial)\\
\hline
Wang \emph{et al.} \cite{wang2025violation} & $2.2745$ & $0.7843$ & $0.004681$\\
Hensen \emph{et al.} (combined) \cite{hensen2016second} & $2.38\pm0.14$ & $0.7975$ & $0.009092$\\
Storz \emph{et al.} \cite{storz2023superconductingbell} & $2.0747\pm0.0033$ & $0.7593$ & $0.000338$\\
J\"ons \emph{et al.} \cite{jons2017bright} & $2.07\pm0.02$ & $0.7588$ & $0.000297$\\
\end{tabular}
\end{ruledtabular}
\end{table*}

Table~\ref{tab:publishedbenchmarks} makes two points that are easy to miss when reading only one case study.
First, the witness translation is portable across very different physical platforms: photons in nanowires, electron spins in diamond, superconducting circuits, and the four-photon interference setting of Wang \emph{et al.}
Second, the certified bits per trial vary nonlinearly with $S$ near the local boundary.
The Wang point sits in an intermediate regime: much stronger than the low-margin nanowire or superconducting-circuit examples in bits per trial, but below the combined diamond estimate.
Taken together, these examples indicate that the witness-based conversion behaves consistently across several experimental platforms.

\subsection{Model classes and why the cosine family is included}

We compare five model classes for the induced $2\times 2\times 2\times 2$ conditional table.

\paragraph*{Two-parameter cosine family $\qtwo$.}
The compact quantum candidate is
\begin{equation}
E_{V,\phi}(\alpha,\beta)=V\cos(\alpha+\beta+\phi),
\end{equation}
with unbiased marginals and probabilities
\begin{align}
p^{(Q_2)}_{++\mid \alpha\beta}=p^{(Q_2)}_{--\mid \alpha\beta}=\frac{1+E_{V,\phi}(\alpha,\beta)}{4},\nonumber\\
p^{(Q_2)}_{+-\mid \alpha\beta}=p^{(Q_2)}_{-+\mid \alpha\beta}=\frac{1-E_{V,\phi}(\alpha,\beta)}{4}.
\label{eq:q2model}
\end{align}
This family is not chosen because it is guaranteed to win a complexity contest.
It is included because it is the smallest smooth family that captures two experimentally meaningful deformations of the ideal CHSH geometry: a global visibility loss $V$ and a common phase offset $\phi$.
In symmetric photonic interference settings these are the first physically interpretable parameters one would write down.
The parameter $V$ absorbs imperfect interference contrast, admixture, and other visibility-reducing effects into a single physically meaningful scale.
The parameter $\phi$ captures a common analyzer misalignment or birefringence-induced phase offset across the four settings, which is a natural coherent deformation to consider before introducing setting-specific correlator distortions.
In that sense $\qtwo$ is meant to be a ``compact but meaningful'' first quantum comparator: not the only low-dimensional nonlocal family one could write down, but the natural first one for an experiment whose geometry is designed around a nearly sinusoidal correlator pattern.

\paragraph*{Four-parameter unbiased-correlator control $\qfour$.}
To test whether the data really support the extra cosine structure, we also fit a less structured zero-marginal family with one independent correlator $E_{xy}$ per setting pair.
This control is still low-dimensional and nonlocal, but it removes the shared trigonometric relation among settings.
If $\qtwo$ beat only locality while losing badly to this control, then the apparent success of the cosine family would be too easy to overstate.

\paragraph*{Reference and broader controls.}
The local polytope $\loc$ is the foundational benchmark, the no-signalling polytope $\ns$ is a broader nonlocal control, and the saturated conditional model $\sat$ is the predictive ceiling that fits each setting independently.
The saturated class is useful in the crossover discussion because it is genuinely more expressive than locality; Eq.~\eqref{eq:practicalcrossover} is most informative for such comparisons.

\subsection{Full-table comparison, low-dimensional control, and bootstrap stability}

Table~\ref{tab:modelcomparison} summarizes the fitted code lengths.
The two-parameter cosine family has fitted parameters
\begin{equation}
\widehat V=0.807016,
\qquad
\widehat\phi=3.014843\ \mathrm{rad}.
\end{equation}
Among the five fitted classes, $\qtwo$ has the smallest BIC.
The local, no-signalling, and saturated controls are respectively worse by about $21.10\bits$, $5.74\bits$, and $9.93\bits$.
For comparison, the broader four-parameter unbiased-correlator control is only $0.35\bits$ worse than $\qtwo$.
The data therefore support a compact nonlocal description over locality, while only weakly distinguishing whether the additional cosine structure of $\qtwo$ is warranted.

\begin{table*}[t]
\caption{Complexity-charged comparison for the Wang \emph{et al.} table.
Here $d$ denotes the nominal effective dimension, $-\log_2\hat{\mathcal L}$ is the fitted negative log-likelihood in bits, and $\Delta\mathrm{BIC}$ is measured relative to the two-parameter cosine family $\qtwo$.}
\label{tab:modelcomparison}
\begin{ruledtabular}
\begin{tabular}{lcccc}
Model & $d$ & $-\log_2\hat{\mathcal L}$ (bits) & $L_{\mathrm{BIC}}$ (bits) & $\Delta\mathrm{BIC}$ vs $\qtwo$ (bits)\\
\hline
$\qtwo$ (cosine, unbiased marginals) & 2  & 5802.884 & 5814.578 & 0.000\\
$\qfour$ (one correlator per setting, unbiased marginals) & 4 & 5791.536 & 5814.923 & 0.345\\
$\loc$ (local polytope) & 8 & 5788.899 & 5835.673 & 21.095\\
$\ns$ (no-signalling polytope) & 8 & 5773.547 & 5820.321 & 5.743\\
$\sat$ (saturated conditional model) & 12 & 5754.348 & 5824.509 & 9.931\\
\end{tabular}
\end{ruledtabular}
\end{table*}

To probe stability, we bootstrap the table within each setting pair while keeping the original setting totals fixed.
Across $1000$ resamples,
\begin{equation}
\Pr[\mathrm{BIC}(\qtwo)<\mathrm{BIC}(\loc)]\approx 0.953,
\end{equation}
whereas
\begin{equation}
\Pr[\mathrm{BIC}(\qtwo)<\mathrm{BIC}(\qfour)]\approx 0.488.
\end{equation}
So the low-dimensional nonlocal-versus-local conclusion is stable, but the exact two-parameter-vs-four-parameter distinction is not.
This is the relevant distinction for interpreting the role of the cosine ansatz.

It is also useful to check how much of the ranking is specific to BIC's growing complexity surcharge.
Using the same fitted likelihoods with Akaike's constant-penalty criterion \cite{akaike1974new}, we obtain
\begin{align}
L_{\mathrm{AIC}}(\qtwo)&\approx 5805.77, &
L_{\mathrm{AIC}}(\qfour)&\approx 5797.31,\nonumber\\
L_{\mathrm{AIC}}(\loc)&\approx 5800.44, &
L_{\mathrm{AIC}}(\ns)&\approx 5785.09,\nonumber\\
L_{\mathrm{AIC}}(\sat)&\approx 5771.66
\end{align}
(bits throughout).
Thus AIC prefers the broader nonlocal controls, especially the saturated model.
This does not change the main empirical conclusion, but it helps delimit its scope.
Across both BIC and AIC, locality is disfavored.
What changes with the complexity proxy is whether one prizes parsimonious low-dimensional nonlocal structure or out-of-sample predictive flexibility.
This sensitivity is one reason to state the witness theorem independently of any single proxy and to read the full-table conclusions as complexity-charged summaries rather than as exact evidence statements.

The full-table gain of the best local fit over the saturated reference corresponds to
\begin{equation}
\frac{-\log_2\hat{\mathcal L}_{\loc}+\log_2\hat{\mathcal L}_{\sat}}{n}
\approx 0.01043\ \text{bits/trial},
\end{equation}
which exceeds the witness lower bound in Eq.~\eqref{eq:wangdeltahat} by a factor of about $2.2$.
This is the expected direction: the witness discards information, so its role is to certify a conservative floor rather than to reproduce the full-table gain exactly.

\subsection{Simulation calibration: how conservative is the witness bound?}

The Wang reanalysis is only one data table, so we add a controlled simulation study.
We generate synthetic CHSH-like tables in three regimes---weak violation, Wang-like violation, and strong violation---by starting from a CHSH-symmetric nonlocal table and then drawing each setting distribution from a Dirichlet neighborhood around that base table.
This construction breaks the exact cosine structure while preserving the overall nonlocal scale.
For each synthetic table we compute the witness lower bound and the actual full-table gain over the best local fit.

Figure~\ref{fig:calibration} shows the result.
All points lie on or above the diagonal, as they must if the witness is a lower bound.
More interesting is how the gap behaves with violation size.
Near the local boundary the bound is markedly conservative: the median full-table gain is about $6.2$ times the median witness lower bound in the weak-violation regime.
Around the Wang-like regime the median ratio drops to about $1.4$, and for stronger violations it is close to $1.1$.
Thus the witness is most conservative near the local boundary and becomes appreciably tighter once the violation is farther from that boundary.

\begin{figure}[t]
\centering
\includegraphics[width=\linewidth]{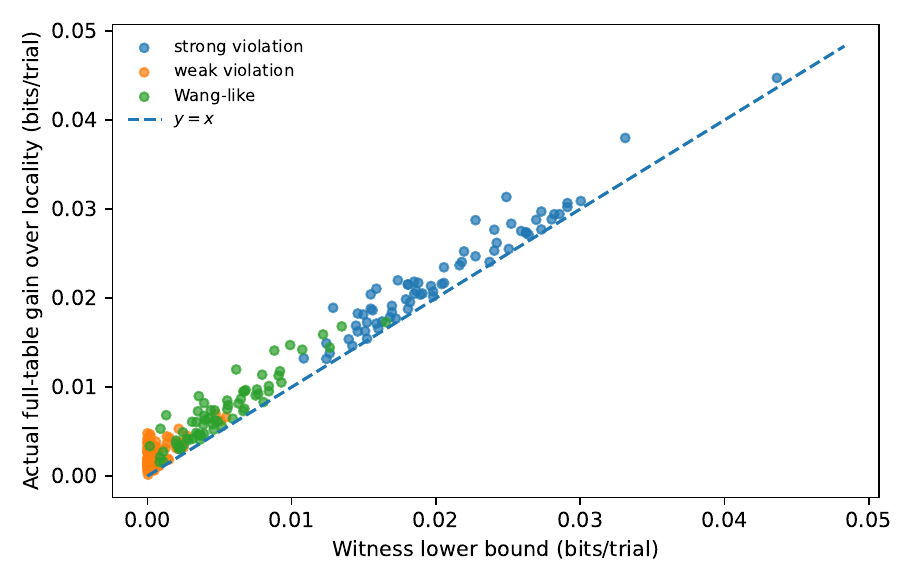}
\caption{Calibration of the witness lower bound against the actual full-table gain over locality.
Each point is a synthetic CHSH-like table generated from a Dirichlet-perturbed nonlocal distribution in one of three regimes.
The horizontal axis is the witness-certified KL lower bound in bits per trial; the vertical axis is the fitted full-table gain of the saturated model over the best local fit, also in bits per trial.
The dashed line is $y=x$.
The witness is most conservative near the local boundary and becomes substantially tighter for stronger violations.
}
\label{fig:calibration}
\end{figure}

\subsection{Phase diagram for paying extra complexity}

A natural setting in which to examine Eq.~\eqref{eq:practicalcrossover} is the comparison between locality and a genuinely more expressive control.
We therefore simulate CHSH-like nonlocal tables and ask when the saturated model beats locality by BIC.
This comparison carries an extra complexity gap of $\Delta d=4$, so the witness crossover line is obtained from
\begin{equation}
n\,\delta_{\mathrm{CHSH}}(S)\gtrsim 2\log_2 n.
\end{equation}

Figure~\ref{fig:phase} shows the resulting phase diagram.
The color indicates the frequency with which the saturated model beats locality by BIC over repeated synthetic tables at a given $(n,S)$.
The dashed line is the witness-predicted crossover from the lower bound alone.
As expected, the witness line is conservative: the full-table transition occurs somewhat earlier, because the full table exploits structure beyond the single win-rate statistic.
Nevertheless, the witness line tracks the correct trend and gives the right order of magnitude.
In particular, around the Wang-like scale $S\approx 2.28$, the empirical $50\%$ BIC transition occurs near $n\approx 3200$, which is close to the Wang table size of $3312$ effective trials.
This calibration helps place the Wang case study on a broader quantitative scale.

\begin{figure}[t]
\centering
\includegraphics[width=\linewidth]{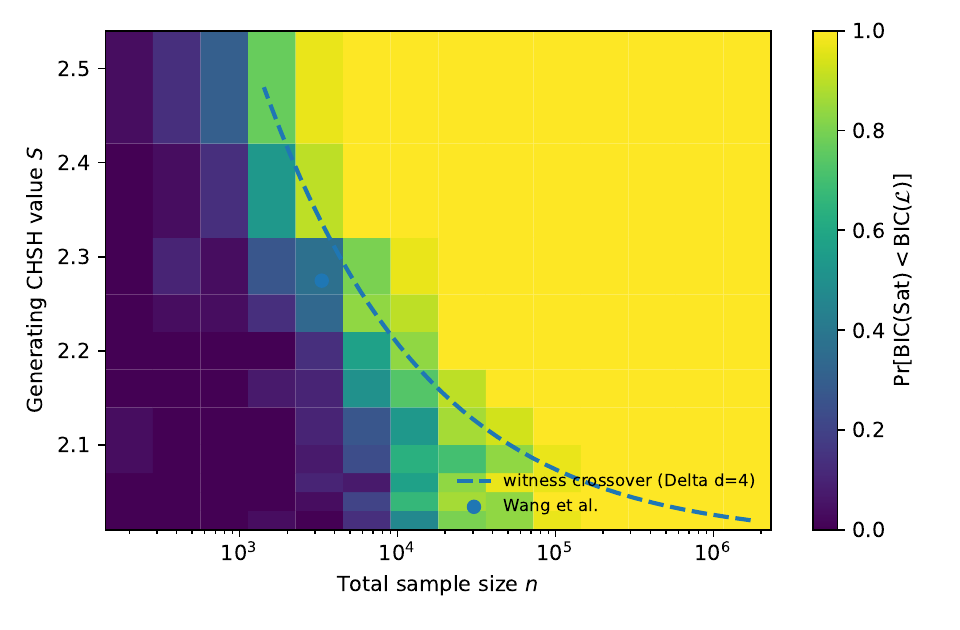}
\caption{Simulation phase diagram for the local-versus-saturated crossover.
Synthetic CHSH-like tables are generated from perturbed nonlocal distributions, and the color indicates the frequency with which the saturated model beats locality by BIC.
The dashed line is the witness-only crossover prediction for the extra complexity gap $\Delta d=4$.
The witness line is conservative but captures the correct transition scale.
The Wang point is marked for reference.
}
\label{fig:phase}
\end{figure}

\section{Interpretational scope of the BIC comparison}
\label{sec:bicscope}

The local and no-signalling classes in Table~\ref{tab:modelcomparison} are constrained polyhedral models.
Their likelihood geometry depends on which face or boundary region is active at the optimum, so the textbook regular-model derivation of BIC does not automatically apply \cite{watanabe2009algebraic,drton2017singular}.
This is also the main reason one should keep theorem-level witness claims and heuristic full-table comparisons conceptually separate.

In the present paper the separation is explicit.
Theorem~\ref{thm:coarsegraining}, Corollary~\ref{cor:binarygame}, Theorem~\ref{thm:chshdesign}, and Proposition~\ref{prop:finitewitness} are rigorous and do not rely on BIC at all.
Their output is a certified lower bound on KL separation, either asymptotically or at fixed confidence under the stated sampling assumptions.
By contrast, the BIC numbers in Table~\ref{tab:modelcomparison} should be read as MDL-style summaries of fit versus nominal dimension.
They are useful because they are transparent, reproducible, and easy to compare in bits, not because they provide exact marginal-likelihood asymptotics for polytope classes.

This distinction changes how the results should be interpreted.
The rigorous statement is that the CHSH witness alone certifies a positive information gap against locality.
The empirical statement is that, on this data table, complexity-charged full-table fits favor low-dimensional nonlocal descriptions over locality and over broader controls.
The bootstrap and simulation studies help clarify the scope of the empirical statement.
They show that the local-versus-nonlocal conclusion is stable, that the witness lower bound is conservative in a quantifiable way, and that the finer distinction between two closely related compact nonlocal families is not stable.

The AIC calculation in Sec.~V.D makes the same point from another angle.
A weaker, $n$-independent penalty moves the preference toward broader nonlocal classes, but it does not restore a preference for locality.
So the robust conclusion is ``nonlocality beats locality''; what depends on the complexity bill is whether the preferred nonlocal account is a compact structured family or a more weakly regularized predictive control.

In a longer program one would want sharper MDL tools tailored to constrained or singular Bell-model classes, such as singular-BIC, normalized maximum-likelihood surrogates, or explicit predictive-validation scores.
Nothing in the witness theory prevents that substitution.
The framework is modular in the following sense: once a better complexity proxy for polytope classes is adopted, the same witness-certified KL rate can be compared to that proxy instead of to classical BIC.

\section{Discussion and outlook}

The main conclusion is that Bell witnesses can be converted into conservative information rates in the same units used by complexity penalties, which allows witness-based evidence to be related directly to complexity-charged model comparison.
CHSH provides the most transparent closed-form instance of this relation.

Several features of the analysis are worth separating.
First, the witness construction is not limited to CHSH: any binary Bell-game witness inherits the Bernoulli certificate of Corollary~\ref{cor:binarygame}, and Proposition~\ref{prop:ghzgame} gives a concrete multipartite example.
Second, the Wang data support compact nonlocal descriptions over locality, but they do \emph{not} strongly favor the specific two-parameter cosine ansatz over the broader four-parameter unbiased-correlator control.
Third, the additional witness-based benchmarks show that the translation from reported CHSH values to certified bits per trial remains informative across several experimental platforms.
Fourth, the AIC comparison and the simulation studies identify where the witness and full-table criteria differ: the witness is most conservative near the local boundary, and the preferred nonlocal model depends on how strongly additional flexibility is penalized.

Several extensions remain of interest.
On the Bell side, it would be useful to obtain closed-form or efficiently computable witness-image characterizations beyond CHSH, especially for multipartite nonlocal games and multivalued scores.
On the MDL side, one would like complexity proxies adapted to singular and polyhedral classes.
On the empirical side, the framework should be tested on additional public Bell datasets with raw trial streams rather than reconstructed coincidence tables, so that the finite-sample theorem can be used directly rather than approximately.
These extensions would sharpen the quantitative analysis, but they do not alter the main observation of the paper: a witness-certified KL lower bound can already be compared meaningfully with a complexity penalty when deciding whether extra model complexity is warranted.

\bibliography{references}

\end{document}